\title{Exponential Dynamical Localization for Random Word Models}
\author{Nishant Rangamani}
\address{Department of Mathematics, University of California, Irvine}
\theoremstyle{plain}
\newtheorem{lem}{Lemma}
\newtheorem{thm}{Theorem}
\theoremstyle{definition}
\newtheorem*{defn}{Definition}
\theoremstyle{remark}
\newtheorem*{rmk}{Remark}
\begin{document}
\maketitle

\section{Abstract}

We show that one-dimensional Schr{\"o}dinger operators whose potentials arise by randomly concatenating words from an underlying set exhibit exponential dynamical localization (EDL) on any compact set which trivially intersects a finite set of critical energies. We do so by first giving a new proof of spectral localization for such operators and then showing that once one has the existence of a complete orthonormal basis of eigenfunctions (with probability one), the same estimates used to prove it naturally lead to a proof of the aforementioned EDL result. The EDL statements provide new localization results for several classes of random Schr{\"o}dinger operators including random polymer models and generalized Anderson models.


\section{Introduction}

In this paper, we consider random word models on $\ell^2(\mathbb{Z})$ given by  $$H_{\omega}\psi(n) = \psi(n+1) + \psi(n-1) +V_{\omega}(n)\psi(n).$$ The potential is a family of random variables defined on a probability space $\Omega$. To construct the potential $V$ above, we fix an $m\in \mathbb{N}$ (a maximum word length) then consider words $\ldots, \omega_{-1}, \omega_0, \omega_1, \ldots$ which are vectors in $\mathbb{R}^{n}$ with $1 \leq n \leq m$, so that $V_{\omega}(0)$ corresponds to the $k$th entry in $\omega_{0}$. A precise construction of the probability space $\Omega$ and the random variables $V_{\omega}(n)$ is carried out in \Cref{subsec:randomwordmodelsetup} and the precise definition of $H_{\omega}$ is given in \cref{eqn:wordop}.

These models are of particular interest not only because they cover a wide class of generalizations of the Anderson model such as the random dimer model, random polymer models, and generalized Anderson models \cite{damanikword}, but also because they provide natural examples of the subtleties involved in the various forms of localization: spectral, dynamical, and exponential dynamical (in expectation). 

Spectral localization occurs when almost surely, the spectrum is pure point and all of its eigenfunctions decay exponentially. A stronger form of localization, dynamical localization, is more closely related to physical manifestations of localization and implies an absence of transport in the random medium and that the wave packet is suitably confined. We say that $H_{\omega}$ exhibits dynamical localization on the interval $I$ if for a.e. $\omega$, and any $\psi \in \ell^{2}(\mathbb{Z})$ which decays exponentially, \begin{equation*}
\sup_{t}\langle P_I(H_\omega)e^{-iH_\omega t}\psi,|X|^qP_I(H_\omega)e^{-iH_\omega t}\psi\rangle < \infty,
\end{equation*}
where $P_I(H_\omega)$ is the spectral projection of $H_{\omega}$ onto the set $I$ and $q > 0$. 

While it was well known that dynamical localization implies spectral localization, that dynamical localization was a strictly stronger notion was not understood until the authors in  \cite{MR1428099} constructed an artificial model which was spectrally but not dynamically localized. Indeed, the example in \cite{MR1428099} showed that pure point spectrum with exponentially decaying eigenfunctions (spectral localization) could coexist with $\limsup_{t \to \infty} \frac{||xe^{-itH\delta_{0}||^{2}}}{t^{\alpha}} = \infty$ for all $\alpha < 2$. In short, spectral localization is not a sufficient condition to ensure an absence of transport, while the stronger notion of dynamical localization does imply this absence.


A well-known physically relevant model which sheds more light on these phenomena is the random dimer model. This model was first introduced in \cite{dunlapwu} and in the random word context introduced above, $\omega_{i}$ takes values $(\lambda,\lambda)$ or $(-\lambda,-\lambda)$ with Bernoulli probability.
It is known that the spectrum of the operator $H_{\omega}$ is almost surely pure point with exponentially decaying eigenfunctions. On the other hand, when $0 < \lambda \leq 1$ (with $\lambda \neq \frac{1}{\sqrt{2}}$), there are critical energies at $E= \pm \lambda$ where the Lyapunov exponent vanishes \cite{germinetdimer}. These so-called critical energies are precisely what prevent the absence of transport and lead to localization-delocalization phenomena. 

In particular, the vanishing Lyapunov exponent at these energies can be exploited to prove lower bounds on quantum transport resulting in almost sure overdiffusive behavior \cite{stolz}. The authors in \cite{stolz} show that for almost every $\omega$ and for every $\alpha > 0$ there is a positive constant $C_{\alpha}$ such that 

\begin{equation*}
\frac{1}{T} \int_0^T \langle \delta_0,e^{iH_{\omega}t}|X|^qe^{-iH_{\omega}t}\delta_0 \rangle dt \geq C_\alpha T^{q-\frac{1}{2}-\alpha}.
\end{equation*}

This was later extended to a sharp estimate in \cite{sharp}.


In light of these remarks, the over-diffusive behavior above contrasts with the fact that not only does the random dimer model display spectral localization, but also dynamical localization  on any compact set $I$ not containing the critical energies $\pm \lambda$ \cite{germinetdimer}.


We strengthen this last result by showing that there is  exponential dynamical localization in expectation (EDL) on any compact set $I$ with $\pm \lambda \notin I$. We say the family of operators $H_{\omega}$ display EDL on the interval $I$ if there are $C, \alpha > 0$ such that for any $p,q \in \mathbb{Z}$,
\begin{equation*}
\mathbb{E} \left[ \sup_{t \in \mathbb{R}} |\langle \delta_p, P_{I}(H_{\omega})e^{-itH_{\omega}}\delta_{q} \rangle|\right] \leq Ce^{-\alpha |p-q|}.
\end{equation*} 

EDL has several interesting physical consequences including exponential decay of the two point function in the ground state \cite{Aizenman_1998} and there is an interest in proving such results in physically relevant contexts such as the dimer and random polymer cases. Our results, however, when taken in conjunction with the over-diffusive behavior above illustrate that the strength of localization does not necessarily impact transport when the localization regime excludes critical energies.

One of the central challenges in dealing with random word models is the lack of regularity of the single-site distribution. The absence of regularity is exactly what allows random word models to encompass singular Anderson models, random dimer models, and more generally, random polymer models.   The issues presented by singularity were previously overcome using multi-scale analysis in various stages; first, in the Anderson setting \cite{ckm}, then in the dimer case \cite{germinetdimer,germinetdl}, and finally for random word models themselves in \cite{damanikword}. The multi-scale approach leads to weaker dynamical localization results than those where sufficient regularity of the single-site distribution allows one to instead appeal to the fractional moment method (e.g. \cite{elgartdl}, \cite{FMM}). In particular, EDL always follows in the framework of the fractional moment method and methodology in \cite{kunz1980spectre, Delyon1983OnedimensionalWE}, but of course  regularity is required.

Loosely speaking, the multi-scale analysis shows that the complement of the event where one has exponential decay of the Green's function has small probability. One of the consequences of this method is that while this event does have small probability, it can only be made sub-exponentially small. 

A recent new proof of spectral and dynamical localization for the one-dimensional Anderson model for arbitrary single-site distributions \cite{xj} uses positivity and large deviations of the Lyapunov exponent to replace parts of the multi-scale analysis. The major improvement in this regard (aside from a shortening of the length and complexity of localization proofs in one-dimension) is that the complement of the event where the Green's function decays exponentially can be shown to have exponentially (rather than sub-exponentially) small probability. These estimates were implicit in the proofs of spectral and dynamical localization given in \cite{xj} and were made explicit in \cite{GeZhao}. The authors in \cite{GeZhao} then used these estimates to prove EDL for the Anderson model and we extend those techniques to the random word case. 

There are, however, several issues one encounters when adapting the techniques developed for the Anderson model in \cite{GeZhao,xj} to the random word case. Firstly, in the Anderson setting, a uniform large deviation estimate is immediately available using a theorem in \cite{tsai}. Since random word models exhibit local correlations, there are additional steps that need to be taken in order to obtain suitable analogs of large deviation estimates used in \cite{GeZhao,xj}. Secondly, random word models may have a finite set of energies where the Lyapunov exponent vanishes and this phenomena demands care in obtaining estimates on the Green's functions analogous to those in \cite{GeZhao,xj}. Dealing with these issues does however, produce an unexpected benefit. Since we must consider Green's functions for non-symmetric intervals, we are able to obtain exponential decay of the Green's function centered around even and odd points simultaneously, while the arguments in \cite{GeZhao,xj} require separate considerations. 


\begin{thm} \label{thm:specloc} With $H_\omega$ defined in \cref{eqn:wordop} (and satisfying \cref{eqn:NC}), for a.e. $\omega$, the spectrum of $H_{\omega}$ is pure point and all of its eigenfunctions decay exponentially. 
\end{thm}

Our main result is:

\begin{thm} \label{thm:EDL}
For $H_\omega$ defined in \cref{eqn:wordop} (and satisfying \cref{eqn:NC}), there is a finite $D \subset \mathbb{R}$ such that if $I$ is a compact set and $D \cap I = \emptyset$, then $H_{\omega}$ exhibits exponential dynamical localization in expectation in the interval $I$.

\end{thm}

The remainder of the paper is organized as follows: \cref{sec:prelim} contains preliminaries
needed to discuss the large deviation estimates found in \cref{sec:largedev}. \Cref{sec:lemmas}) contains the precursors needed for localization  and \cref{sec:specloc}
contains the proof of \Cref{thm:specloc} and \cref{sec:EDL} contains the proof of \Cref{thm:EDL}.

\section{Preliminaries} \label{sec:prelim}

\subsection{Model Set-up} \label{subsec:randomwordmodelsetup}

We begin by providing details on the construction of $\Omega$ and $V_{\omega}(n)$ by following \cite{damanikword}.

Fix $m \in \mathbb{N}$ (the maximum word length) and $M > 0$. Set $\mathcal{W} = \bigcup_{j=1}^m \mathcal{W}_{j}$ where $\mathcal W_{j} = [-M,M]^{j}$ and $\nu_{j}$ are finite Borel measures on $\mathcal{W}_{j}$ so that $\sum_{j=1}^{m} \nu_{j}(\mathcal{W}_{j}) = 1$. Let $\nu$ denote the direct sum of the measures $\nu_j$, a probability measure on $\mathcal{W}$.

Additionally, we assume that $(\mathcal{W},\nu)$ has two words which do not commute. That is, \begin{equation} \label{eqn:NC}
\left\{ \begin{array}{lc}
\text{For} \hspace{0.1cm} i = 0,1 \hspace{0.1cm} \text {there} \hspace{0.1cm} \text{exist} \hspace{0.1cm} w_{i} \in \mathcal{W}_{j_{i}} \in supp(\nu) \hspace{0.1cm} \hspace{0.1cm} \text{such} \hspace{0.1cm} \text{that}\\
(w_{0}(1),w_{0}(2),...,w_{0}(j_{0}),w_{1}(1),w_{1}(2),...,w_{1}(j_{1}))\\  
\text{and} \hspace{0.1cm} (w_{1}(1),w_{1}(2),...,w_{1}(j_{1}),w_{0}(1),w_{0}(2),...,w_{0}(j_{0})) \hspace{0.1cm} \text{are} \hspace{0.1cm} \text{distinct.} \\
\end{array}
\right.
\end{equation} Here, $supp(\nu)$ refers to the support of the measure $\nu$. 

Set $\Omega_{0} = \mathcal{W}^{\mathbb{Z}}$ and $\mathbb{P}_{0} = \otimes_{\mathbb{Z}} \nu$ on the $\sigma$-algebra generated by the cylinder sets in $\Omega_{0}$. 

The average length of a word is defined by $\langle L \rangle = \sum_{j=1}^{m} j \nu(\mathcal{W}_{j})$ and if $w \in \mathcal{W} \cap \mathcal{W}_{j}$, we say $w$ has length $j$ and write $|w| = j$. 

We define $\Omega = \bigcup_{j=1}^{m} \Omega_{j} \subset \Omega_{0} \times \{1,...,m\}$ where $\Omega_{j} = \{\omega \in \Omega_{0} : |\omega_{0}| = j\} \times \{1,...,j\}$. We define the probability measure $\mathbb{P}$ on the $\sigma$-algebra generated by the sets $A \times \{k\}$ where $A \subset \Omega_{0}$ such that for all $\omega \in A$, $|\omega_{0}|=j$ and $1 \leq k \leq j$. 

For such sets we set 
\begin{equation}\label{eqn:lifting}\mathbb{P}[A \times \{k\}] = \frac{\mathbb{P}_{0}(A)}{\langle L \rangle}.
\end{equation}

\begin{rmk} The above construction implies that every event $A \subset \Omega_{0}$ gives rise to an event $\tilde{A} \subset \Omega$ with the same probability (up to multiplication by $\langle L \rangle$).

\end{rmk}

The shifts $T_{0}$ and $T$ on $\Omega_{0}$ and $\Omega$ (respectively) are given by:

$(T_{0}\omega)_n = \omega_{n+1}$ and  \begin{equation*}
T(\omega,k)=
\left\{ \begin{array}{lc}
(\omega, k+1) & \text{ if } k< |\omega_{0}| \\
(T_0(\omega),1) & \text { if } k = |\omega_{0}|. \\
\end{array}
\right.
\end{equation*}

With this set-up, the shift $T$ is ergodic and the potential $V_{\omega,k}$ is obtained through $...,\omega_{-1},\omega_{0},\omega_{1},...$ so that $V_{\omega,k}(0) = \omega_{0}(k)$.    

That is, \begin{equation} \label{eqn:wordop}
(H_{(\omega,k)}u)(n) = u(n+1) + u(n-1) + V_{(\omega,k)}(n)u(n)
\end{equation} for all $u \in \ell^{2}(\mathbb{Z})$.

Thus, the ergodicity of the shift $T$ implies the results from \cite{cmp/1103908097} can be used to show the spectrum of $H_{(\omega,k)}$ is almost surely a non-random set. 

\begin{rmk} For notational convenience, we will often drop the $k$ from the subscript on $H_{(\omega,k)}$.

\end{rmk}

\subsection{Basic Definitions and Notations}

\begin{defn} We call $\psi_{\omega,E}$ a \textit{generalized eigenfunction} with \textit{generalized eigenvalue} $E$ if $H_{\omega}\psi_{\omega,E}=E\psi_{\omega,E}$ and $|\psi_{\omega,E}(n)| \leq (1+|n|)$.

\end{defn}

We denote the restriction of $H_{\omega}$ to the interval $[a,b] \cap \mathbb{Z}$ where $a,b \in \mathbb{Z}$ by $H_{\omega,[a,b]}$ and for $E\notin \sigma(H_{\omega,[a,b]})$ the corresponding Green's function by $$G_{[a,b],E,\omega}=(H_{\omega,[a,b]}-E)^{-1}.$$
Additionally, we let 
$$P_{[a,b],E,\omega}=\det(H_{\omega,[a,b]}-E)$$ and 
$$\tilde{P}_{[a,b],E,\omega,}=\det(E-H_{\omega,[a,b]}).$$

We also let $E_{j,[a,b],\omega}$ denote the $j$th eigenvalue of the operator $H_{\omega,[a,b]}$ and note that there are $b-a+1$ many of them (counting multiplicity). 

\begin{defn}
$x\in \mathbb{Z}$ is called $(c,n_1,n_2,E,\omega)$-\textit{regular} if there is a $c>0$ so that:
\begin{enumerate}
\item $|G_{[x-n_{1},x+n_{2}],E,\omega}(x,x-n_{1})|\leq e^{-cn_{1}}$ and
\item $|G_{[x-n_{1},x+n_{2}],E,\omega}(x,x+n_{2})|\leq e^{-cn_{2}}$.
\end{enumerate}
\end{defn}

By a standard decoupling argument, we have for any generalized eigenfunction $\psi_{\omega,E}$ and any $x \in [a,b]$,

\begin{equation} \label{eqn:greenef}
 \psi_{\omega,E}(x)=-G_{[a,b],E,\omega}(x,a)\psi_{\omega,E}(a-1)-G_{[a,b],E,\omega}(x,b)\psi_{\omega,E}(b+1), 
\end{equation}

and 

\begin{equation} \label{eqn:greendet}
 |G_{[a,b],E,\omega}(x,y)|=\frac{|P_{[a,x-1],E,\omega}P_{[y+1,b],E,\omega}|}{|P_{[a,b],E,\omega}|}.
\end{equation}

\subsection{Transfer Matrices and the Lyapunov Exponent}

For $w \in \mathcal{W}$, with $w = (w_{1},...,w_{j})$, we define word transfer matrices by $T_{w,E} = T_{w_{j},E} \cdots T_{w_{1},E}$ where $T_{v,E} = \begin{pmatrix}
E-v & -1\\
1   & 0
\end{pmatrix}$.

The transfer matrices over several words are given by 
\begin{equation} \label{eqn:wordtransfer}
T_{\omega,E}(k,l) =
\left\{ \begin{array}{lc}
T_{\omega_{k},E} \cdots T_{\omega_{l},E} & \text{ if } k > l, \\
\mathbb{I} & \text{ if } k = l,\\
{T_{\omega,E}}^{-1}(l,k) & \text { if } k < l. \\
\end{array}
\right.
\end{equation}

and $T_{[a,b],E,\omega}$ denotes the product of the transfer matrices so that for any generalized eigenfunction $\psi$ with generalized eigenvalue $E$:
\begin{equation*}
T_{[a,b],E,\omega}\begin{pmatrix}
\psi(a)\\
\psi(a-1)\\
\end{pmatrix}=\begin{pmatrix}
\psi(b+1)\\
\psi(b)\\
\end{pmatrix}.
\end{equation*}

An inductive argument shows:

\begin{align} \label{eqn:detrest}
T_{[a,b],E,\omega}=\begin{pmatrix}
\tilde{P}_{[a,b],E,\omega} & -\tilde{P}_{[a+1,b],E,\omega}\\
\tilde{P}_{[a,b-1],E,\omega} & -\tilde{P}_{[a+1,b-1],E,\omega}\\
\end{pmatrix}.
\end{align}

We now define two Lyapunov exponents, one via matrix products obtained from $\Omega_{0}$ and the other from $\Omega$. We will see that the two quantities are essentially the same (up to multiplication by a positive constant) and hence provide the same information. It is worth noting, however, that the former is obtained through products of i.i.d. matrices while the latter is not. In fact, we will need to utilize independence to obtain various estimates on the matrix products and it is therefore important to verify the relationship between the two Lyapunov exponents.

Since both $T_{0}$ and $T$ are ergodic, Kingman's subadditive ergodic theorem \cite{10.1214/aop/1176996798} allows us to define the Lyapunov exponent. Recalling that $\langle L \rangle$ denotes the average length of a word, by the arguments in \cite{damanikword}, we have:

in $\Omega_{0}$,
\begin{equation*}
\gamma_{0}(E) := \lim_{k \to \infty} \frac{1}{k} \log || T_{\omega,E}(k,1)||,
\end{equation*}

and in $\Omega$,

\begin{equation*}
\gamma(E) := \lim_{k \to \infty} \frac{1}{k\langle L \rangle} \log \left|\left| \begin{pmatrix}
E-V_{\omega}(n) & -1\\
1   & 0
\end{pmatrix} \hdots \begin{pmatrix}
E-V_{\omega}(1) & -1\\
1   & 0
\end{pmatrix}\right|\right|.
\end{equation*}

In both cases, the limit exists for fixed $E$ on a full measure set. 

\begin{rmk} We note that the limit in $\Omega$ is defined via one-step transfer matrices while the limit in $\Omega_{0}$ is defined via word transfer matrices.

\end{rmk}

In \cite{damanikword}, the authors prove the relationship between the two Lyapunov exponents described in the following theorem.

\begin{thm} $\frac{\gamma_{0}(E)}{\langle L \rangle} = \gamma(E)$.

\end{thm}

Let $\mu_E$ denote the smallest closed subgroup of $SL(2,\mathbb{R})$ generated by the `word'-step transfer matrices. It is shown in \cite{damanikword} that $\mu_E$ is strongly irreducible and contracting for all $E$ outside of a finite set $D \subset \mathbb{R}$ and hence, Furstenberg's theorem implies $\gamma(E) >0$ for all such $E$. Since the Lyapunov exponent is defined as a product of i.i.d. matrices,  $\gamma$ is continuous. So, if $I$ is a compact set such that $D \cap I = \emptyset$ and 
$$\nu:=\inf\{\gamma(E):E \in I\}, $$
then $\nu>0$.

Motivated by \cref{eqn:detrest} above and large deviation theorems, we define:
 
\begin{equation*}
 {B^{+}}_{[a,b],\varepsilon}=\left\{(E,\omega): E\in I,{|P_{[a,b],E,\omega}|}\geq e^{(\gamma(E)+\varepsilon)(b-a+1)}\right\},
 \end{equation*} 
 and
 
\begin{equation*}
 {B^{-}}_{[a,b],\varepsilon}=\left\{(E,\omega): E \in I,{|P_{[a,b],E,\omega}|}\leq e^{(\gamma(E)-\varepsilon)(b-a+1)}\right\},
\end{equation*}

and the corresponding sections:

\begin{equation*}
{B^{\pm}}_{[a,b],\varepsilon,\omega}=\left\{E: (E,\omega) \in B^{\pm}_ {[a,b],\varepsilon}\right\},
\end{equation*}
and 
\begin{equation*}
 {B^{\pm}}_{[a,b],\varepsilon,E}=\left\{\omega: (E,\omega) \in B^{\pm}_ {[a,b],\varepsilon}\right\}.
\end{equation*}

\section{Large Deviation Theorems} \label{sec:largedev}

The goal of this section is to obtain a uniform large deviation estimate for $P_{[a,b],E}$. In the Anderson model, a direct application of Tsay's theorem for matrix elements of products of i.i.d. matrices results in both an upper and lower bound for the above determinants. In the general random word case, there are two issues. Firstly, the one-step transfer matrices are not independent. This issue is naturally resolved by considering $\omega_{k}$-step transfer matrices and treating products over each word as a single step. However, in this case, both the randomness in the length of the chain as well as products involving partial words need to be accounted for. Since matrix elements are majorized by the norm of the matrix and all matrices in question are uniformly bounded, we can obtain an upper bound identical to the one obtained in the Anderson case. Lower bounds on the matrix elements are more delicate and require the introduction of random scales. For the reader's convenience, we first recall Tsay's theorem and then give the precise statements and proofs of the results alluded to above.

As remarked above, $\mu_E$ is strongly irreducible and contracting for $E \in I$. In addition, the `word'-step transfer matrices (defined in \cref{eqn:wordtransfer}) are bounded, independent, and identically distributed. These conditions are sufficient to apply Tsay's theorem on large deviations of matrix elements for products of i.i.d. matrices. 

\begin{thm}[\cite{tsai}] \label{thm:LDT} Suppose $I$ is a compact interval and for each $E \in I$, $Z_{1}^{E},...,Z_{n}^{E},...$ are bounded i.i.d random matrices such that the smallest closed subgroup of $SL(2,\mathbb{R})$ generated by the matrices is strongly irreducible and contracting. Then for any $\varepsilon > 0$, there is an $\eta > 0$ and an $N\in\mathbb{N}$ such that for any $E \in I$, for any unit vectors $u$, $v$ and $n > N$, $$\mathbb{P}\left[ e^{(\gamma(E) - \varepsilon)n} \leq | \langle Z_{n}^{E} \hdots Z_{1}^{E}u,v \rangle| \leq e^{(\gamma(E) + \varepsilon)n}\right] \geq 1 - e^{-\eta n}.$$
 
\end{thm}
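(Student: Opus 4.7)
The plan is to reduce the statement to two classical ingredients: the large deviation theorem of Le Page for norms of products of i.i.d.\ matrices, and Hölder regularity of the stationary measure on $\mathbb{P}^1(\mathbb{R})$. The upper bound is immediate since $|\langle Z_n^E \cdots Z_1^E u, v\rangle| \leq \|Z_n^E \cdots Z_1^E\|$, so Le Page's large deviation upper bound on the norm (valid under strong irreducibility, contraction, and uniform boundedness) gives the desired estimate with the same constants independent of $u$ and $v$.

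The lower bound is the delicate part. I would use the Cartan decomposition $Z_n^E \cdots Z_1^E = K_n D_n L_n$ with $D_n = \mathrm{diag}(s_n, s_n^{-1})$ and $s_n = \|Z_n^E \cdots Z_1^E\|$, so that
\[
\langle Z_n^E \cdots Z_1^E u, v\rangle = s_n \,\langle L_n u, e_1\rangle\, \langle K_n e_1, v\rangle + O(s_n^{-1}).
\]
The task then becomes showing that with probability at least $1-e^{-\eta n}$ the two angle factors $|\langle L_n u, e_1\rangle|$ and $|\langle K_n e_1, v\rangle|$ are each bounded below by $e^{-\varepsilon n/3}$, uniformly over unit vectors $u,v$, since combined with Le Page's lower bound $s_n \geq e^{(\gamma(E) - \varepsilon/3)n}$ this delivers the claim.

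These non-concentration estimates I would get from the Hölder regularity of the unique invariant measure $\nu_E$ on $\mathbb{P}^1$ furnished by Le Page's theorem: there exist $\alpha, C > 0$ with $\nu_E(\{w : |\langle w, v\rangle| < r\}) \leq C r^\alpha$ for every unit $v$. Coupled with the spectral gap of the associated Markov operator on $\alpha$-Hölder functions, the law of the direction $K_n e_1$ is $\nu_E$ up to an exponentially small error in total variation, which provides the desired non-concentration for $|\langle K_n e_1, v\rangle|$. The dual statement, applied to the reversed product, handles $|\langle L_n u, e_1\rangle|$.

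The main obstacle is uniformity in the parameter $E$. Since $\mu_E$ depends continuously on $E$, $I$ is compact, and $D \cap I = \emptyset$, strong irreducibility and contraction constants can be taken independent of $E \in I$; together with continuity of $\gamma$ on $I$, this should give a uniform Hölder exponent $\alpha$, a uniform spectral gap for the associated Markov operators, and hence uniform deviation rate $\eta$ and threshold $N$. Extracting the uniformity explicitly from Le Page's proof is the technical heart of the argument and is precisely what Tsai's theorem as quoted from \cite{tsai} encapsulates.
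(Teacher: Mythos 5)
First, a point of comparison: the paper does not prove this statement at all --- Theorem 3 is imported wholesale from \cite{tsai}, stated with a citation and no argument --- so there is no internal proof to measure yours against. Your outline is the standard Furstenberg--Le Page route that underlies Tsai's result: norm large deviations for the upper bound, Cartan decomposition plus non-concentration of the singular directions for the lower bound, H\"older regularity of the stationary measure and the spectral gap of the associated Markov operator to control the angle factors, and compactness plus continuity in $E$ for uniformity. That is the right architecture, and the reduction of the lower bound to the two angle estimates (with the $O(s_n^{-1})$ cross term absorbed using $\inf_I\gamma>0$) is correct.

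One step, however, is false as written, and it fails in exactly the case this paper cares about. You assert that the law of the direction $K_n e_1$ agrees with $\nu_E$ up to an exponentially small error \emph{in total variation}. When the underlying distribution is finitely supported (e.g.\ the Bernoulli dimer model, the motivating example here), the law of $K_n e_1$ is purely atomic, carried by at most exponentially many points of $\mathbb{P}^1$, while $\nu_E$ is non-atomic by strong irreducibility; the total variation distance between them is therefore identically $1$ for every $n$. What the spectral gap on H\"older functions actually gives is exponential convergence in the dual-H\"older (Wasserstein-type) metric. To extract non-concentration from that you must test against a Lipschitz regularization of the indicator of the $r$-neighborhood of the hyperplane $v^{\perp}$, use the regularity estimate $\nu_E(\{w : |\langle w, v\rangle| < r\}) \leq C r^{\alpha}$ to control the smoothed quantity, and then optimize $r \sim e^{-\varepsilon n/3}$ against the two error terms. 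With that repair --- and the routine remark that the probability bound is required for each fixed pair $u,v$, so one only needs uniformity of the constants rather than a single uniform event --- your sketch is a faithful reconstruction of the cited proof.
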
 

\begin{lem} \label{thm:lrgdev} If $I$ is compact and $I \cap D = \emptyset$, then for any $\varepsilon > 0$ there is an $\eta > 0$ and an $N$ such that if $a$,$b \in \mathbb{Z}$ such that $b-a+1 > N$ and  $E \in I$, then
$$\mathbb{P}[B^{+}_{[a,b],\varepsilon,E}] \leq e^{-\eta(b-a+1)}.$$
\end{lem}

\begin{proof} Let $Y_i = |\omega_{i}|$, so $Y_i$ is the length of the $i$th word and let $S_n = Y_1 +\cdots + Y_n$.

Let $u = \begin{pmatrix}
\ 1 \\
\ 0
\end{pmatrix}$ 
and let $P_{(\omega_{1},\omega_{n}),E} = \det(H_{(\omega_{1},\omega_{n})}-E)$ where $H_{(\omega_{1},\omega_{n})}$ denotes $H_{\omega}$ restricted to the interval where $V$ takes values determined by $\omega_{1}$ through $\omega_{n}$. By \cref{eqn:detrest} from the previous section, $|P_{(\omega_{1},\omega_{n}),E}| = |\langle T_{\omega,E}(n,1) u,u \rangle|$.

Letting $\varepsilon > 0$ and applying \Cref{thm:LDT} to the random products $T_{\omega,E}(k,1)$, we obtain an $\eta_{1} > 0$ and an $N_1$ such that for $n > N_{1}$, $\mathbb{P}_{0}[\{\omega \in \Omega_{0}: |P_{(\omega_{1},\omega_{n}),E}| \leq e^{(\gamma(E) + \varepsilon) n\langle L \rangle}\}] \geq 1- e^{-\eta_{1}n}$.

Now let $\varepsilon_{1} > 0$ so that $\varepsilon_{1}\sup\{\gamma(E): E \in I\} < \varepsilon$. We apply large deviation estimates (e.g. \cite{durrett}) to the real, bounded, i.i.d. random variables $Y_i$ to obtain an $N_{2}$ and an $\eta_{2} > 0$ such that for $n > N_{2}$, $\mathbb{P}_{0} [S_n - n\varepsilon_{1} < n\langle L \rangle < S_n + n\varepsilon_{1}] \geq 1 - e^{-\eta_{2}n}$.

Denoting the intersection of the above events by $A_n$, we have, on $A_{n}$,

\begin{align*}
|P_{(\omega_{1},\omega_{n}),E}| &\leq e^{(\gamma(E) + \varepsilon)n \langle L \rangle } \\
&\leq e^{(\gamma(E) + \varepsilon)(S_n + n\varepsilon_{1})} \\
&= e^{(\gamma(E) + \varepsilon)S_n +\gamma(E)\varepsilon_{1}n + n\varepsilon\varepsilon_{1}} \\
& \leq e^{(\gamma(E) + \varepsilon)S_n +\gamma(E)\varepsilon_{1}S_n + S_n\varepsilon\varepsilon_{1}} \\
& \leq e^{(\gamma(E) + 3 \varepsilon)S_n}.
\end{align*}

Thus, we have an event in $\Omega_{0}$ where the Lyapunov behavior is a true reflection of the length of the interval and we can obtain an estimate in between two words.

That is, for any $1 \leq k \leq S_{n+1} - S_{n}$, let $P_{(\omega_{1},\omega_{n}+k),E} = \det(H_{(\omega_{1},\omega_{n} + k)}-E)$ where $H_{(\omega_{1},\omega_{n} + k)}$ denotes $H_{\omega}$ restricted to the interval where $V_{\omega}$ takes values determined by $\omega_{1}$ through the $k$th letter of $\omega_{n+1}$.

Since the one-step transfer matrices are uniformly bounded, \cref{eqn:detrest} and the last inequality imply for any $1 \leq k \leq S_{n+1} - S_{n}$, $|P_{[1,S_{n}+k],E,\omega}| \leq Ce^{(\gamma(E) + 3 \varepsilon)S_n} \leq e^{(\gamma(E) + 4 \varepsilon)S_n}$  on $A_n$.

Let $\eta_{3} = \min\{\eta_{1},\eta_{2}\}$, and choose $0<\eta<\frac{\eta_{3}}{2m}$. 

Since every event in $\Omega_{0}$ gives rise to an event in $\Omega$ (e.g. \cref{eqn:lifting}), we obtain an estimate where the Lyapunov behavior and the probability of the event reflect the true length of the interval, so we can apply the shift $T$ to conclude that for any sufficiently large $n$, 
$$\mathbb{P}[\{|P_{[1,n],E,\omega}| \leq e^{(\gamma(E) + 4 \varepsilon)n}\}] \geq 1 - e^{-\eta n}.$$

The result now follows for any interval $[a,b]$ (with $b-a +1$ sufficiently large) since $T$ preserves the probability of events.

\end{proof}

We finish the section with a lemma that relies crucially on independence. As above, we will work in $\Omega_{0}$ and `lift' our results to $\Omega$.

The lemma below holds for any fixed $K>1$ and this $K$ will be chosen in the next section.

\begin{lem}\label{thm:boxev2} There are real-valued random variables $R_{n}$, $R^{'}_{n}$, $Q_{n}$, $Q^{'}_{n}$, and $\tilde{Q}_{n}$ such that:


if $I$ is compact with $I \cap D = \emptyset$ and $\varepsilon > 0$ with

\begin{equation*}
F_{l,n,\varepsilon}^{3} = \bigcap_{j} \left\{\omega: E_{j, [l+Q_{n}+k,l+Q^{'}_{n}],\omega} \notin B^{-}_{[l+R_{n},l+R^{'}_{n}],\varepsilon,\omega} \;\forall k, \hspace{0.1cm} 0 \leq k \leq 2m \right\},
\end{equation*}

\begin{equation*}
F_{l,n,\varepsilon}^{2+} = \bigcap_{j} \left\{\omega: E_{j,[l+Q_{n}+k,l+Q^{'}_{n}],\omega} \notin B^{+}_{[y,l+R^{'}_{n}],\varepsilon,\omega} \;\forall y \in \left[l+R_{n}, l+R^{'}_{n} - \frac{n}{K}\right], 0 \leq k \leq 2m\right\},
\end{equation*}

\begin{equation*}
F_{l,n,\varepsilon}^{2-} = \bigcap_{j} \left\{\omega: E_{j,[l+Q_{n}+k,l+Q^{'}_{n}],\omega} \notin B^{+}_{[l+R_{n},y],\varepsilon,\omega} \forall y \in \left[l+R_{n} +\frac{n}{K}, l+R^{'}_{n}\right], 0 \leq k \leq 2m \right\},
\end{equation*}
and
\begin{equation*}
F_{l,n,\varepsilon}^{2} = F_{l,n,\varepsilon}^{2+} \cap F_{l,n,\varepsilon}^{2-},
\end{equation*}










then there is $N$ and an $\eta' > 0$ such that if $n>N$, $l \in \mathbb{Z}$, and $E \in I$:

\begin{equation*}\mathbb{P}[B^{-}_{[l+R_{n},l+R^{'}_{n}],\varepsilon,E}] \leq e^{-\eta'(2n+1)},
\end{equation*}

\begin{equation*}\mathbb{P}[B^{-}_{[l+Q_{n},l+Q^{'}_{n}],\varepsilon,E}] \leq e^{-\eta'(2n+1)},
\end{equation*}

\begin{equation*}\mathbb{P}[F_{l,n,\varepsilon}^{3}] \geq 1 - 2m^{2}(2n+3)^{2}e^{-\eta'(2n+1)},
\end{equation*}

and
\begin{equation*}\mathbb{P}[F_{l,n,\varepsilon}^{2}] \geq 1 - 2m^{4}(2n+3)^{3}e^{-\eta'(\frac{n}{K})}.
\end{equation*}
\end{lem}

\begin{proof}


For $\omega \in \Omega_0$, and $a,b \in \mathbb{Z}$, let $H_{(\omega_{a},\omega_{b})}$ denote the restriction of $H$ to the interval where the potential is given by the words $\omega_{a}$ through $\omega_{b}$. Take $u = \begin{pmatrix}
\ 1 \\
\ 0
\end{pmatrix}$, 
and let $P_{(\omega_{a},\omega_{b}),E} = \det(H_{(\omega_{a},\omega_{b})}-E)$ where $H_{(\omega_{a},\omega_{b})}$ denotes $H_{\omega}$ restricted to the interval in which $V$ takes values determined by $\omega_{a}$ through $\omega_{b}$. By \cref{eqn:detrest} from the previous section, $|P_{(\omega_{-n},\omega_{n}),E}| = |\langle T_{\omega,E}(n,-n) u,u \rangle|$. Finally, let $S_{(a,b)} = |\omega_{a}|+ \cdots + |\omega_{b}|$. Letting $\varepsilon > 0$ and applying \Cref{thm:LDT} to the random products $T_{\omega,E}(k,1)$, we obtain an $\eta_{1} > 0$ and an $N_1$ such that for $n > N_{1}$ and any $E \in I$, 
\begin{equation}
\mathbb{P}_{0}[\{\omega \in \Omega_{0} : |P_{(\omega_{-n},\omega_{n}),E}| \geq e^{(\gamma(E) - \varepsilon) (2n+1)\langle L \rangle}\}] \geq 1- e^{-\eta_{1}(2n+1)}.
\end{equation} \label{eqn:4.24}
and 
\begin{equation} \label{eqn:4.25}
\mathbb{P}_{0}[\{\omega \in \Omega_{0} : |P_{(\omega_{n+1},\omega_{3n+1}),E}| \geq e^{(\gamma(E) - \varepsilon) (2n+1)\langle L \rangle}\}] \geq 1- e^{-\eta_{1}(2n+1)}.
\end{equation}

If $E_{j}$ denotes an eigenvalue corresponding to $H_{(\omega_{n+1},\omega_{3n+1})}$, then by independence $$\mathbb{P}_{0}[\{\omega \in \Omega_{0} : |P_{(\omega_{-n},\omega_{n}),E_{j}}| \geq e^{(\gamma(E_{j}) + \varepsilon)(2n+1)\langle L \rangle}\}] \geq 1- e^{-\eta_{1}(2n+1)}$$ whenever $n > N$. We can also apply large deviation theorems (e.g. \cite{durrett}) to the (real) random products $S_{(a,b)}$ where $0 <\varepsilon_{1} < \min\{1,\varepsilon,\varepsilon\sup\{\gamma(E): E \in I\}\}$ to obtain an $N$ and an $\eta_{2} > 0$  such that whenever $b-a +1 > N$,
\begin{equation} \label{eqn:4.26}
(b-a+1)(\langle L \rangle - \varepsilon_{1}) \leq S_{(a,b)} \leq (b-a+1)(\langle L \rangle + \varepsilon_{1})
\end{equation} 
with probability greater than $1-e^{-\eta_{2}(b-a+1)}$.


Thus, we have an event $A$ where:

\begin{align} \label{eqn:4.27}
\begin{split}
|P_{(\omega_{-n},\omega_{n}),E_j}| &\geq e^{(\gamma(E) - \varepsilon)(2n+1) \langle L \rangle } \\
&\geq e^{(\gamma(E) - \varepsilon)(S_{(-n,n)} - (2n+1)\varepsilon_{1})} \\
&= e^{(\gamma(E) - \varepsilon)S_{(-n,n)} -(\gamma(E)\varepsilon_{1})(2n+1) + (2n+1)\varepsilon\varepsilon_{1}}\\
& \geq e^{(\gamma(E) - \varepsilon)S_{(-n,n)} -(\gamma(E)\varepsilon_{1})S_{(-n,n)} + (2n+1)\varepsilon\varepsilon_{1}}\\
& \geq e^{(\gamma(E)  - 2\varepsilon)S_n}.
\end{split}
\end{align}

By using a similar argument to deal with the upper-bound, we have:

$|P_{(\omega_{-n},\omega_{n}),E_j}|\leq e^{(\gamma(E)  + 3\varepsilon)S_{(-n,n)}}$.

In particular, whenever $\frac{n}{K} > N$, and $y \in [-n + \frac{n}{K},n]$, we have:
\begin{equation} \label{eqn:4.28}
|P_{E_{j},(\omega_{y},\omega_{n})}|\leq e^{(\gamma(E)  + 3\varepsilon)S_{(-n+\frac{n}{K},n)}}
\end{equation}
 with probability greater than $1 - e^{\eta_{1}\frac{n}{K}}$ with a similar estimate holding whenever $y \in [-n,n-\frac{n}{K}]$.

Since the single-step transfer matrices are uniformly bounded and there are at most $m$ single-step transfer matrices in a word transfer matrix, we have a $C>0$ such that:

\begin{equation*}\mathbb{P}_{0}[\{ \omega \in \Omega_{0} : |P_{(\omega_{y}+k,\omega_{n}),E_j}| \leq Ce^{(\gamma(E_{j}) + \varepsilon)(n-y)}\}] \leq e^{-\eta_{1}(n-y-1)},
\end{equation*}

where $P_{(\omega_{y}+k,\omega_{n})}$ denotes the determinant obtained by restricting $H$ from the $k$th letter of $\omega_{y}$ to $\omega_{n}$.
 








Note that the products $S_{(a,b)}$ are also well-defined on $\Omega$ and we now are able to define the random variables from the statement of the lemma.

For $(\omega,k) \in \Omega$ we put,

\begin{enumerate}

\item $R^{'}_{n} = S_{(0,n)} - k$,

\item $R_{n} = -S_{(-n,-1)} - k+1$,

\item $Q_{n} = R^{'}_{n} + 1$,

\item $Q^{'}_{n} = R^{'}_{n} + S_{(n+1,3n+1)}$, and

\item  $\tilde{Q}_{n} = R^{'}_{n} + S_{(n+1,2n+1)}$.

\end{enumerate}

Choosing $0 < \tilde{\eta_{2}} < \eta_{2}$ and applying \cref{eqn:4.26}, we obtain an event with probability greater than $1 - e^{-\tilde{\eta_{2}}n}$ where we have estimates on all of the random variables defined above.

We choose $\eta'> 0$ to be smaller than $\eta_{1}$ and $\tilde{\eta_{2}}$ so that by the remark after \cref{eqn:lifting}, the definitions of the random variables above, and \cref{eqn:4.24} and \cref{eqn:4.25}, we have for any $E \in I$:

 $$\mathbb{P}[B^{-}_{[l+R_{n},l+R^{'}_{n}],E}] \leq e^{-\eta'(2n+1)}$$
and

$$\mathbb{P}[B^{-}_{[l+Q_{n},l+Q^{'}_{n}],E}] \leq e^{-\eta'(2n+1)}.$$

Moreover, by the same reasoning, after taking the union over all of the possible eigenvalues $E_{j}$ and the ordered pairs $(y,n)$ (and $(-n,y)$), \cref{eqn:4.27} and \cref{eqn:4.28} provide the desired estimates on 

$$F_{0,n,\varepsilon}^{3} = \bigcap_{j} \left\{\omega: E_{j,[Q_{n}+k,Q^{'}_{n}],\omega} \notin B^{-}_{[R_{n},R^{'}_{n}],\varepsilon,\omega} \hspace{0.1cm} \text{for all} \hspace{0.1cm} k \hspace{0.1cm} \text{with} \hspace{0.1cm} 0 \leq k \leq 2m \right\}$$

and $F_{0,n,\varepsilon}^{2}.$

The result now follows by applying the (measure-preserving) shift $T$ so that the intervals are centered around $l$ rather than $0$.

\end{proof}

\begin{rmk} Each of the lemmas above furnish a positive constant: $\eta$, $\eta^{'}$. For a fixed $\varepsilon$, we call the the minimum of these constants the `large deviation parameter' associated with $\varepsilon$ and denote it by $\eta_{\varepsilon}$.

\end{rmk}

\section{Lemmas}\label{sec:lemmas}

We prove localization results on a compact interval $I$ where $D \cap I = \emptyset$. In order to do so, we fix a larger interval $\tilde{I}$ such that $I$ is properly contained in $\tilde{I}$ and $D \cap \tilde{I} = \emptyset$, then apply the large deviation theorems from the previous section to $\tilde{I}$.

The following lemmas involve parameters $\varepsilon_0, \varepsilon, \eta_0, \delta_0, \eta_\varepsilon$,$K$, and the intervals $I,\tilde{I}$. The lemmas hold for any values satisfying the constraints below: 
\begin{enumerate}
\item Let $\nu =\inf \{\gamma(E): E \in \tilde{I}\}$, take $0<\varepsilon_0<\nu/8$ and let $\eta_0$ denote the large deviation parameter corresponding to $\varepsilon_0$. Choose any $0<\delta_0<\eta_0$ and let $0<\varepsilon<\min\{(\eta_0-\delta_0)/3m,\varepsilon_{0}/4\}$. Choose $\tilde{M}>0$  so that $|P_{[a,b],E,\omega}|\leq \tilde{M}^{b-a+1}$ for all intervals $[a,b]$, $E\in \tilde{I}$, and $\omega \in \Omega.$
Lastly, choose $K$ so that $\tilde{M}^{1/K}<e^{\nu/2}$ and let $\eta_\varepsilon$, $\eta_{\frac{\varepsilon}{4}}$ denote the large deviation parameters corresponding to $\varepsilon$ and $\frac{\varepsilon}{4}$ respectively. 
\item Any $N$'s and constants furnished by the lemmas below depend only on the parameters above (i.e. they are independent of $l \in \mathbb{Z}$ and $\omega$).

\end{enumerate}
 Thus, for the remainder of the work, $\varepsilon_0, \varepsilon, \eta_0, \delta_0, \eta_\varepsilon,$ and $K$ will be treated as fixed parameters chosen in the manner outlined above. 
 
\begin{rmk} Note that the sets $B^{\pm}_{[a,b],\varepsilon}$ are hereafter defined in terms of $\tilde{I}$ rather than $I$. 
\end{rmk}

Following \cite{xj} and \cite{GeZhao}, we define subsets of $\Omega$ below on which we have regularity of the Green's functions. This is the key to the proof of all the localization results. As mentioned in the introduction, the proofs of spectral and dynamical localization given in \cite{xj} show that an event formed by the complement of the sets below has exponentially small probability. These estimates were exploited in \cite{GeZhao} to provide a proof of exponential dynamical localization for the one-dimensional Anderson model. We follow the example set in these two papers with appropriate modifications needed to handle the presence of critical energies and the varying length of words.

Let $m_{L}$ denote Lebesgue measure on $\mathbb{R}$.

\begin{lem} \label{thm:singpts}
If $n\geq 2$ and $x$ is $(\gamma(E)-8\varepsilon_0,n_{1},n_{2},E,\omega)$-singular, then $$(E,\omega)\in {B^-}_{[x-n_{1},x+n_{2}],\varepsilon_0}\cup {B^+}_{[x-n_{1},x-1],\varepsilon_0}\cup {B^+}_{[x+1,x+n_{2}],\varepsilon_0}.$$
\end{lem}

\begin{proof}
The result follows by \cref{eqn:greendet} and the definition of singularity.

\end{proof}

Let $R_n, R^{'}_n, Q_n, Q^{'}_n,$ and $\tilde{Q}_n$ be the random variables from \Cref{thm:boxev2} and for $l \in \mathbb{Z}$ set 

\begin{equation*}
F_{l,n,\varepsilon_0}^{1}=\{\omega: \max\{m_{L}(B^{-}_{[l+R_{n},l+R^{'}_{n}],\varepsilon_0,\omega}),m_{L}(B^{-}_{[l+Q_{n},l+Q^{'}_{n}],\varepsilon_0,\omega})\} \leq e^{-(\eta_0-\delta_0)(2n+1)} \}.
\end{equation*}

\begin{lem} \label{thm:evmeas}
There is an $N$ such that for $n> N$ and any $l \in \mathbb{Z}$, 
$$\mathbb{P}[F_{l,n,\varepsilon_{0}}^{1}] \geq 1 - 2m_{L}(\tilde{I})e^{-\delta_{0}(2n+1)}.$$
\end{lem}

\proof

With $0<\varepsilon_0 < 8\nu$ as above, choose $N$ such that the conclusion of \Cref{thm:boxev2} holds. Then for $n > N$,
\begin{align*}
m_{L}\times \mathbb{P}(B^{-}_{[l+R_{n},l+R^{'}_{n}],\varepsilon_0})&=\mathbb{E}(m_{L}(B^{-}_{[l+R_{n},l+R^{'}_{n}],\varepsilon_0,\omega}))\\
&=\int_{\mathbb{R}}\mathbb{P}(B^{-}_{[l+R_{n},l+R^{'}_{n}],\varepsilon_0,E})\; dm_{L}(E)\\
&\leq m_{L}(\tilde{I})e^{-\eta_0(2n+1)}.\\
\end{align*}

Applying the same reasoning to$B^{-}_{[l+Q_{n},l+Q^{'}_{n}],\varepsilon_0}$, by the estimate above and Chebyshev's inequality, 
$$e^{-(\eta_0-\delta_0)(2n+1)}\mathbb{P}[(F^{1}_{l,n,\varepsilon_0})^{c}]\leq 2m_{L}(\tilde{I})e ^{-\eta_0(2n+1)}.$$

The result follows by multiplying both sides of the last inequality by $e^{(\eta_{0} - \delta_{0})(2n+1)}$. 

\qed

\begin{rmk}
\Cref{thm:martini} is proved in \cite{xj} and used there to give a uniform (and quantitative) Craig-Simon estimate similar to the one in \cite{GeZhao}. 
\end{rmk}

\begin{lem} \label{thm:martini}
Let $Q(x)$ be a polynomial of degree $n-1$. Let $x_i=\cos\frac{2\pi(i+\theta)}{n}$, for $0< \theta <\frac{1}{2}, i=1,2,\ldots, n$. If $Q(x_i)\leq a^n$, for all $i$, then $Q(x)\leq Cna^n$, for all $x\in[-1,1]$, where $C=C(\theta)$ is a constant. 
\end{lem}

Set 
\begin{equation*}
F_{[a,b],\varepsilon}=\{\omega: |P_{[a,b],E,\omega}| \leq e^{(\gamma(E)+4\varepsilon)(b-a+1)} \text{ for all } E \in \tilde{I} \}.
\end{equation*}

\begin{lem} \label{thm:uniformcs}

There are $C >0$ and $N$ such that for $b-a+1 > N$, $$\mathbb{P}[F_{[a,b],\varepsilon}] \geq 1 - C(b-a+2)e^{-\eta_{\frac{\varepsilon}{4}}(b-a+1)}.$$

\end{lem}

\proof 

Since $\tilde{I}$ is compact and $\tilde{I} \cap D = \emptyset$, $\tilde{I}$ is contained in the union of finitely many compact intervals which all intersect $D$ trivially. Hence, it suffices to prove the result for all $E$ in one of these intervals. So fix one of these intervals, call it $I_1$. By continuity of $\gamma$ and compactness of $I_1$, if $\varepsilon > 0$, there is $\delta >0$ such that if $E$,$E^{'} \in I_1$ with $|E-E^{'}| < \delta$, then $|\gamma(E) - \gamma(E^{'})| < \frac{1}{4}\varepsilon$. Divide $I_1$ into sub-intervals of size $\delta$, denoted by $J_{k}=[E_{k}^{n},E_{k+1}^{n}]$ where $k=1,2,...C$. Additionally, let $E_{k,i}^{n} = E_{k}^{n} + (x_{i} + 1)\frac{\delta}{2}$.

By \Cref{thm:lrgdev}, there is an $N$ such that for $b-a+1>N$, 

$$\mathbb{P}[\{\omega : |P_{[a,b],E_{k,i}^{n},\omega}| \geq e^{(\gamma(E_{k,i}^{n}) + \frac{1}{4}\varepsilon)(b-a+1)}\}] \leq e^{-\eta_{\frac{\varepsilon}{4}}(b-a+1)}.$$

Put $\displaystyle F_{[a,b],k,\varepsilon} = \bigcup_{i=1}^{n} \{\omega : |P_{[a,b],E_{k,i}^{n},\omega}| \geq e^{(\gamma(E_{k,i}^{n}) + \frac{1}{4}\varepsilon)(b-a+1)}\}$ and $\gamma_{k} = \inf_{E \in J_k} \gamma(E)$. For $\omega \notin F_{[a,b],k,\varepsilon}$, $|P_{[a,b],E_{k,i}^{n},\omega}| \leq e^{(\gamma_{k}+ \frac{1}{2}\varepsilon)(b-a+1)}$. Thus, an application of \Cref{thm:martini} yields, for any such $\omega$,
$|P_{[a,b],E,\omega}| \leq e^{(\gamma(E) + \frac{3}{4}\varepsilon)(b-a+1)}$.

We have $$\mathbb{P}\left[\bigcup_{k=1}^{C} F_{[a,b],k,\varepsilon}\right] \geq 1 - C(b-a+2)e^{\eta_{\frac{\varepsilon}{4}}(b-a+2)}.$$

Thus, since $$\bigcup_{k=1}^{C} F_{[a,b],k,\varepsilon} \subset F_{[a,b],\varepsilon},$$ the result follows.

\qed

Let $n_{r,\omega}$ denote the center of localization (if it exists) for $\psi_{\omega,E}$ (i.e. $|\psi_{\omega,E}(n)| \leq |\psi_{\omega,E}(n_{r,\omega})|$). Note that by the results in \cite{gordon}, $n_{r,\omega}$ can be chosen as a measurable function of $\omega$. 


Put $F^{4}_{l,n,\varepsilon} = \left((F_{[l+R_{n},l-1],\varepsilon} \cap F_{[l+1,l+R^{'}_{n}],\varepsilon}\right) \cap \left(\bigcup_{0 \leq k \leq 2m} (F_{[l+\tilde{Q}_{n}+k+1,l+Q^{'}_{n} ],\varepsilon} \cap F_{[l+Q_{n},l+\tilde{Q}_{n} + k - 1],\varepsilon}\right)$

and

\begin{equation*}
J_{l,n,\varepsilon} = F_{l,n,\varepsilon_0}^{1} \cap F_{l,n,\varepsilon}^{2} \cap F_{l,n,\varepsilon}^{3} \cap F^{4}_{l,n,\varepsilon}.
\end{equation*}

\begin{lem} \label{thm:lowerboundnumerator}
There is $N$ such that if $n>N$, $\omega \in J_{l,n,\varepsilon}$, with a generalized eigenfunction $\psi_{\omega,E}$ satisfying either
\begin{enumerate} 
\item $n_{r,\omega}=l$, or
\item $|\psi_{\omega}(l)| \geq \frac{1}{2},$ 
\end{enumerate}

then if $l+\tilde{Q}_{n}+k$ is $(\gamma(E)-8\varepsilon_0,\tilde{Q}_{n}+k - Q_{n},Q^{'}_{n}-\tilde
{Q}_{n}-k,E, \omega)$-singular (with $0 \leq k \leq 2m$), there exist
 $$l+R_{n}\leq y_1 \leq y_2 \leq l+R^{'}_{n}$$ and $E_{j}=E_{j,[l+Q_{n} + k,l+Q^{'}_{n}],\omega}$ such that 
\begin{equation*}
|P_{[l+R_{n},y_1],E_j,\tilde{\omega}}P_{[y_2,l+R^{'}_{n}],E_j,\omega}|\geq \frac{1}{2m_{L}(\tilde{I})\sqrt{m(2n+1)}}e^{(\gamma(E_j)-\varepsilon)(R^{'}_{n}-R_{n}+1)+(\eta_0-\delta_0)(2n+1)}.
\end{equation*}
\end{lem}

\begin{rmk}
Note that $y_1$ and $y_2$ depend on $\omega$ and $l$ but we do not include this subscript for notational convenience. In particular, this is done when the other terms in expressions involving $y_1$ or $y_2$ have the correct subscript and indicate the appropriate dependence.
\end{rmk}
\proof

Firstly, if $|\psi_{\omega}(l)| \geq \frac{1}{2}$, we may choose $N_1$ such that $l$ is $(\gamma(E)-8\varepsilon_{0}, -R_{n},R^{'}_{n}, E, \omega)$-singular for $n>N_1$. In the case that $n_{r,\omega}=l$, then there is an $N_2$ such that $l$ is naturally, $(\nu-8\varepsilon_{0},-R_{n},R^{'}_{n}, E, \omega)$-singular for all $n>N_2$. Choose $N_3$ so that $e^{\frac{-\nu}{2}n} < \text{dist}(I,\tilde{I})$ for $n>N_3$ and finally choose $N$ to be larger than  $N_1$, $N_2$, $N_3$ and the $N$'s from \Cref{thm:evmeas}, \Cref{thm:boxev2}, and \Cref{thm:uniformcs}.

Suppose that for some $n>N$, $l+\tilde{Q}_{n}+k$ is $(\gamma(E)-8\varepsilon_0,\tilde{Q}_{n}+k - Q_{n},Q^{'}_{n}-\tilde
{Q}_{n}-k,E, \omega)$-singular. By \Cref{thm:singpts} and \Cref{thm:uniformcs}, $E\in B^{-}_{[l+Q_{n},l+Q^{'}_{n}],\varepsilon_0,\omega}$. Note that all  eigenvalues of $H_{[l+Q_{n},l+Q^{'}_{n}],\omega}$ belong to $B^{-}_{[l+Q_{n},l+Q^{'}_{n}],\varepsilon_0,\omega}$. Since $P_{[l+Q_{n},l+Q^{'}_{n}],\tilde{E},\omega}$ is a polynomial in $\tilde{E}$, it follows that $B^{-}_{[l+Q_{n},l+Q^{'}_{n}],\omega,\varepsilon_{0}}$ is contained in the union of sufficiently small intervals centered at the eigenvalues of $H_{\omega,[l+Q_{n},l+Q^{'}_{n}]}$. Moreover, \Cref{thm:evmeas} gives $$m(B^{-}_{[l+Q_{n},l+Q^{'}_{n}],\omega,\varepsilon_{0}})\leq 2m_{L}(\tilde{I})e^{-(\eta_0-\delta_0)(2n+1)},$$ so we have the existence of $E_j=E_{j,[l+Q_{n},l+Q^{'}_{n}],\omega}$ so that $|E-E_j|\leq 2m_{L}(\tilde{I})e^{-(\eta_0-\delta_0)(2n+1)}$.

Applying the above argument with $l$ in place of $l+\tilde{Q}_{n} + k$ yields an eigenvalue $E_i=E_{i,\omega,[l+R_{n},l+R^{'}_{n}]}$ such that $E_i\in B^{-}_{[l+R_{n},l+R^{'}_{n}],\varepsilon_0,\omega}$ and $|E-E_i|\leq 2m_{L}(\tilde{I})e^{-(\eta_0-\delta_0)(2n+1)}$. Hence, $|E_i-E_j|\leq 4m_{L}(\tilde{I})e^{-(\eta_0-\delta_0)(2n+1)}$. By the previous line and the fact that $E_j\notin B^{-}_{[l+R_{n},l+R^{'}_{n}],\varepsilon,\omega}$, we see that $||G_{[l+R_{n},l+R^{'}_{n}],E_j,\omega}||\geq \frac{1}{4m_{L}(\tilde{I})}e^{(\eta_0-\delta_0)(2n+1)}$ so that for some $y_1,y_2$ with $l+R_{n}\leq y_1 \leq y_2 \leq l+R^{'}_{n}$, 
$$|G_{[l+R_{n},l+R^{'}_{n}],E_j,\omega}(y_1,y_2)|\geq \frac{1}{4m_{L}(\tilde{I})\sqrt{m(2n+1)}} e^{(\eta_0-\delta_0)(2n+1)}. $$

Additionally, another application of \Cref{thm:boxev2} yields, $|P_{[l+R_{n},l+R^{'}_{n}],E_j,\omega}|\geq e^{(\gamma(E_j)-\varepsilon)(R^{'}_{n}-R_{n}+1)}$. 

Thus, by \cref{eqn:greendet} we obtain
$$|P_{[l+R_{n},y_1],E_j,\omega}P_{[y_2,l+R^{'}_{n}],E_j,\omega}|\geq \frac{1}{4m_{L}(\tilde{I})\sqrt{m(2n+1)}}e^{(\gamma(E_j)-\varepsilon)(R^{'}_{n}-R_{n}+1)+(\eta_0-\delta_0)(2n+1)}. $$

\qed

\begin{lem} \label{thm:goodsetmeas}
There is a $\tilde{\eta}>0$ and $N$ such that $n>N$ implies $\mathbb{P}(J_{l,n,\varepsilon})\geq 1-e^{-\tilde{\eta}n}.$
\end{lem}

\proof

Let $\mathcal{A}_{1} = [2n+1 - m,(2n+3)m]$,
$\mathcal{A}_{2} = [3n+1 - m,(3n+3)m]$, and 
$\mathcal{A}_{3} = [n+1 - m,(n+3)m]$.

Thus, $$ \left(\bigcup_{j_{1} \in \mathcal{A}_{1},j_{2} \in \mathcal{A}_{2}, j_2-j_1\geq n/2}F_{[j_{1},j_{2}],\varepsilon}\right) \cup \left(\bigcup_{j_{3} \in \mathcal{A}_{3},j_{4} \in \mathcal{A}_{1}, j_4-j_3\geq n/2}F_{[j_{3},j_{4}],\varepsilon}\right)$$ is contained in $$\left(\bigcup_{0 \leq k \leq 2m} (F_{[l+\tilde{Q}_{n}+k+1,l+Q^{'}_{n} ],\varepsilon} \cap F_{[l+Q_{n},l+\tilde{Q}_{n} + k - 1],\varepsilon})\right).$$

We also have $$\mathbb{P}\left[\left(\bigcup_{j_{1} \in \mathcal{A}_{1},j_{2} \in \mathcal{A}_{2}, j_2-j_1\geq n/2}F_{[j_{1},j_{2}],\varepsilon}\right) \cup \left(\bigcup_{j_{3} \in \mathcal{A}_{3},j_{4} \in \mathcal{A}_{1}, j_4-j_3\geq n/2}F_{[j_{3},j_{4}],\varepsilon}\right)\right] \geq 1 - 2n^{2}e^{-\eta\frac{n}{2}}$$

for sufficiently large $n$. 

Note that the same reasoning provides a similar estimate on $\mathbb{P}[F_{[l+R_{n},l-1],\varepsilon} \cap F_{[l+1,l+R^{'}_{n}],\varepsilon}]$.



Choose $N$ as in \Cref{thm:lowerboundnumerator}, and note that by \Cref{thm:evmeas}, \Cref{thm:boxev2}, \Cref{thm:uniformcs} and the argument above, for  $n>N$,

\begin{equation*}
\mathbb{P}[J_{l,n,\varepsilon}]\geq 1 - 2m_{L}(\tilde{I})e^{-\delta_{0}(2n+1)} - 2m^{4}(2n+3)^{3}e^{-\eta_{\varepsilon}(\frac{n}{K})} - 2m^{2}(2n+3)^{2}e^{-\eta_{\varepsilon}(2n+1)} - 4n^{2}e^{-\eta_{\frac{\varepsilon}{4}}\frac{n}{2}}.
\end{equation*}

We may choose $\tilde{\eta}$ sufficiently close to $0$ and increase $N$ such that for $n>N$ , we have $$2m_{L}(\tilde{I})e^{-\delta_{0}(2n+1)} + 2m^{4}(2n+3)^{3}e^{-\eta_{\varepsilon}(\frac{n}{K})} + 2m^{2}(2n+3)^{2}e^{-\eta_{\varepsilon}(2n+1)} + 4n^{2}e^{-\eta_{\frac{\varepsilon}{4}}\frac{n}{2}} \leq e^{-\tilde{\eta}n},$$

and the result follows.

\qed

\begin{lem} \label{thm:lrgnumerator}

There is $N$ such that for $n>N$, any $\omega \in J_{l,n,\varepsilon}$, any $y_1$, $y_2$ with $l+R_{n} \leq y_1 \leq y_2 \leq l+R^{'}_{n}$ and any $E_j = E_{j,[l+k + Q_{n},l+Q^{'}_{n}],\omega}$ (with $0 \leq k \leq 2m$), $$|P_{[l+R_{n},y_1],E_j,\omega}P_{[y_2,l+R^{'}_{n}],E_j,\omega}|\leq e^{(\gamma(E_j)+\varepsilon)(R^{'}_{n}-R_{n}+1)}.$$

\end{lem}

\proof
By choosing $N$ so that \Cref{thm:boxev2} holds for $n>N$, we are led to consider three cases:
\begin{enumerate}
\item $l+R_{n} +\frac{n}{K} \leq y_1 \leq y_2 \leq l+R^{'}_{n} - \frac{n}{K}$,
\item $l+R_{n} +\frac{n}{K} \leq y_1 \leq l+R_{n}$, while $ l+R^{'}_{n} - \frac{n}{K} \leq y_2 \leq l+R^{'}_{n}$, and finally,
\item $l+R_{n} \leq y_1 \leq l+R_{n} +\frac{n}{K}$ and  $l+R_{n} + \frac{n}{K} \leq y_2 \leq l+R^{'}_{n}$.

\end{enumerate}

In the first case, \Cref{thm:boxev2} immediately yields: 
$$|P_{[l+R_{n},y_1],E_j,\omega}P_{[y_2,l+R^{'}_{n}],E_j,\omega}|\leq e^{(\gamma(E_j)+\varepsilon)(R^{'}_{n}-R_{n}+1)}.$$

In the second case, we have $|P_{[y_2,l+R^{'}_{n}],E_j,\omega}| \leq \tilde{M}^{\frac{n}{K}}$, while \Cref{thm:boxev2} gives $$|P_{[l+R_{n},y_1],E_j,\omega}| \leq e^{(\gamma(E_j)+\varepsilon)(\frac{n}{K})}.$$ By our choice of $K$, $\tilde{M}^{\frac{1}{K}} \leq e^{\frac{\nu}{2}} \leq e^{(\gamma(E_j)+\varepsilon)}$, so we again obtain the desired result.

Finally, in the third case, $|P_{[l+R_{n},y_1],E_j,\omega}P_{[y_2,l+R^{'}_{n}],E_j,\omega}|\leq \tilde{M}^{\frac{2n}{K}} \leq e^{(\gamma(E_j)+\varepsilon)(R^{'}_{n}-R_{n}+1)}$ (again by our choice of $K$).
\qed

\section{Spectral Localization} \label{sec:specloc}

\begin{thm} \label{thm:goodsetreg}
There is $N$ such that if $n>N$, $0 \leq k \leq 2m$, and $ \omega \in J_{l,n,\varepsilon}$, with a generalized eigenfunction $\psi_{\omega,E}$ satisfying either
\begin{enumerate}
\item $n_{r,\omega}=l$, or
\item $|\psi_{\omega}(l)| \geq \frac{1}{2}$,
\end{enumerate}

then $l+\tilde{Q}_{n} + k$ is $(\gamma(E)-8\varepsilon_0,\tilde{Q}_{n}+k - Q_{n},Q^{'}_{n}-\tilde
{Q}_{n}-k,E, \omega)$-regular.
\end{thm}

\proof

Choose $N$ so that \Cref{thm:lowerboundnumerator} and \Cref{thm:lrgnumerator} hold and $$\frac{1}{4m_{L}(\tilde{I})\sqrt{m(2n+1)}}e^{(\gamma(E_j) - \varepsilon)(R^{'}_{n}-R_{n}+1)+ (\eta_{0} - \delta_{0})(2n+1)} > e^{(\gamma(E_j)+\varepsilon)(R^{'}_{n}-R_{n}+1)}$$ for $n>N$. This can be done since $\varepsilon < \frac{\eta_{0} - \delta_{0}}{3m}$.

For $n>N$, we obtain the conclusion of the theorem. For if $l+Q_{n}+k$ was not $(\gamma(E)-8\varepsilon_0,\tilde{Q}_{n}+k - Q_{n},Q^{'}_{n}-\tilde
{Q}_{n}-k,E, \omega)$-regular, then by \Cref{thm:lowerboundnumerator}  $$|P_{[l+R_{n},y_1],E_j,\omega}P_{[y_2,l+R^{'}_{n}],E_j,\omega}|\geq\frac{1}{4m_{L}(\tilde{I})\sqrt{m(2n+1)}} e^{(\gamma(E_j)-\varepsilon)(R^{'}_{n}-R_{n}+1)+(\eta_0-\delta_0)(2n+1)}. $$ 
On the other hand, by \Cref{thm:lrgnumerator}, we have
$$|P_{[l+R_{n},y_1],E_j,\omega}P_{[y_2,l+R^{'}_{n}],E_j,\omega}|\leq e^{(\gamma(E_j)+\varepsilon)(R^{'}_{n}-R_{n}+1)}.$$

Our choice of $N$ in the first line of the proof yields a contradiction and completes the argument.

\qed

We are now ready to give the proof of \Cref{thm:specloc}. Again, $R_n$, $R^{'}_{n}, Q_n, Q^{'}_{n},$ and $\tilde{Q}_n$ are the scales from \Cref{thm:boxev2}.

\proof By \Cref{thm:goodsetmeas}, $\mathbb{P}[J_{0,n,\varepsilon} \text{ eventually }]=1$. Thus, we obtain $\tilde{\Omega}$ with $\mathbb{P}[\tilde{\Omega}]=1$ and for $\omega \in \tilde{\Omega}$, there is $N(\omega)$ such that for $n>N(\omega)$, $\omega \in J_{0,n,\varepsilon}$.

Since the spectral measures are supported by the set of generalized eigenvalues (e.g. \cite{Schnols}), it suffices to show for all $\omega \in \tilde{\Omega}$, every generalized eigenfunction with generalized eigenvalue $E \in I$ is in fact an $\ell^{2}(\mathbb{Z})$ eigenfunction which decays exponentially.

Fix an $\omega$ in $\tilde{\Omega}$ and let $\psi = \psi_{\omega,E} $ be a generalized eigenfunction for $H_{\omega}$ with generalized eigenvalue $E$. By \Cref{eqn:greenef}, and the bounds established on $R_n, R^{'}_n, Q_n, Q^{'}_n,$ and $\tilde{Q}_n$ from \Cref{thm:boxev2}, it suffices to show that there is $N(\omega)$ such that for $n>N{(\omega)}$, if $0 \leq k < 2m$, then $\tilde{Q}_{n}+k$ is $(\gamma(E)-8\varepsilon_0,\tilde{Q}_{n}+k - Q_{n},Q^{'}_{n}-\tilde
{Q}_{n}-k,E, \omega)$-regular. We may assume $\psi(0) \neq 0$, and moreover, by rescaling $\psi$, $|\psi(0)| \geq \frac{1}{2}$. Choose $N$ so that for $n > N$, the conclusions of \Cref{thm:goodsetreg} hold. Additionally, we may choose $N(\omega)$ such that for $n> N(\omega)$, $\omega \in J_{0,n,\varepsilon}$. For $n > \max\{N,N(\omega)\}$, the hypotheses of \Cref{thm:goodsetreg} are met, and hence $\tilde{Q}_{n}+k$ is $(\gamma(E)-8\varepsilon_0,\tilde{Q}_{n}+k - Q_{n},Q^{'}_{n}-\tilde
{Q}_{n}-k,E, \omega)$-regular. 
\qed

\section{Exponential Dynamical Localization} \label{sec:EDL}

The strategy used in this section follows \cite{GeZhao} with appropriate modifications needed to deal with the fact that single-step transfer matrices were not used in the large deviation estimates. In particular, the randomness in the conclusion of \Cref{thm:goodsetreg} will need to be accounted for.

The following lemma was shown in \cite{krugerj} and we state a version below suitable for obtaining EDL on the interval $I$.

Let $u_{k,\omega}$ denote an orthonormal basis of eigenvectors for $Ran(P_{I}(H_\omega))$, the range of the spectral projection of $H_{\omega}$ onto the interval $I$.

\begin{lem} \label{thm:kruger}\cite{krugerj}
Suppose there is $\tilde{C}>0$ and $\tilde{\gamma} > 0$ such that for any $s$, $l \in \mathbb{Z}$, 
$$ \mathbb{E}\left[\sum_{n_{r,\omega}=l}|u_{k,\omega}(s)|^{2}\right] \leq \tilde{C}e^{-\tilde{\gamma}|s-l|}.$$ Then there are $C>0$ and $\gamma>0$ such that for any $p,q \in \mathbb{Z}$, $$ \mathbb{E}[\sup_{t \in \mathbb{R}}|\langle \delta_{p},P_{I}(H_\omega)e^{itH_{\omega}}\delta_{q} \rangle|] \leq C(|p-q| + 1)e^{-\gamma|p-q|}.$$
\end{lem}

By \Cref{thm:kruger}, \Cref{thm:EDL} follows from \Cref{thm:krugerholds}.

\begin{thm} \label{thm:krugerholds}
There is $\tilde{C}>0$ and $\tilde{\gamma} > 0$ such that for any $s$,$l \in \mathbb{Z}$, $$ \mathbb{E}\left[\sum_{n_{r,\omega}=l}|u_{k,\omega}(s)|^{2}\right] \leq \tilde{C}e^{-\tilde{\gamma}|s-l|}.$$
\end{thm}

\proof 
We choose $N$ so that \Cref{thm:goodsetreg} and \Cref{thm:goodsetmeas} hold and for $0 \leq k \leq 2m$, set $\zeta_{j,k}=\min\{\tilde{Q}_j+k-Q_j,Q^{'}_j-\tilde{Q}_j-k\}$. We then choose $ 0 < c < (\nu - 8\varepsilon_{0})$ and increase $N$ such that if $j> N$ and $0 \leq k \leq 2m$, $c (\tilde{Q}_{j} + k) < (\nu - 8\varepsilon_{0}) 2\zeta_{j,k}$. This can be done using the bounds on $Q_n, Q^{'}_n,$ and $\tilde{Q}_n$ established in \Cref{thm:boxev2}. Finally, we choose $0 < \tilde{\eta}_{1} < \frac{\tilde{\eta}}{(\langle L \rangle + \varepsilon)3}$ and increase $N$ such that if $j > N$, $(\frac{\tilde{\eta}}{(\langle L \rangle + \varepsilon)3} - \tilde{\eta_{1}})j > \ln(j)$.



Now consider $s$ and $l$ in $\mathbb{Z}$.

There are two cases to consider:
\begin{enumerate}
\item $s-l > 2m(N+1)$,
\item $s-l \leq 2m(N+1)$.
\end{enumerate}

Now suppose $n_{r,\omega} = l$ and $l+\tilde{Q}_{j} \leq s < l+\tilde{Q}_{j+1}$, and $\omega \in J_{l,j,\varepsilon}$, since $j > N$, using \Cref{thm:goodsetreg} and \cref{eqn:greenef},

\begin{align*} 
\begin{split}
|u_{r,\omega}(s)| &\leq 2|u_{r,\omega}(l)|e^{-(\gamma(E_{r,\omega})-8\varepsilon_{0}) \zeta} \\
&\leq 2|u_{r,\omega}(l)|e^{-(\nu-8\varepsilon_{0}) \zeta}.
\end{split}
\end{align*}

By orthonormality and H{\"o}lder's inequality,
\begin{align*}
\begin{split}
\sum_{n_{r,\omega} = l} |u_{r,\omega}(s)|^{2} &\leq 4\sum_{n_{r,\omega} = l} |u_{r,\omega}(l)|^{2}e^{-(\nu-8\varepsilon_{0}) 2\zeta_{j,k}}\\
&\leq 4\sum_{n_{r,\omega} = l} e^{-(\nu-8\varepsilon_{0}) 2\zeta_{j,k}}.
\end{split}
\end{align*}

We need to replace the randomness in the exponent above with an estimate that depends only on the point $s$.

Since $s = l + \tilde{Q}_{j} + k$ with $0 \leq k \leq 2m$,  by our choice of $c$ and $N$,

\begin{align*}
\sum_{n_{r,\omega} = l} e^{-(\nu-8\varepsilon_{0}) 2\zeta_{j,k}}
&\leq \sum_{n_{r,\omega} = l} e^{-c |s-l|}.
\end{align*}

Now let $\mathcal{A}$ denote the set of $j \in \mathbb{Z}$ so that $l+\tilde{Q}_{j} \leq s < l+\tilde{Q}_{j+1}$.

Finally, letting $J = \bigcup_{ j \in \mathcal{A}} J_{l,j,\varepsilon}$,  using the estimate provided by \Cref{thm:goodsetmeas} on $\mathbb{P}[J_{l,j,\varepsilon}]$ and our choice of $\tilde{\eta_{1}}$,
\begin{align*} 
\begin{split}
\mathbb{E}\left[\sum_{n_{r,\omega}=l} |u_{k,\omega}|^{2}\right]
&= \mathbb{E}\left[\sum_{n_{r,\omega}=l} |u_{r,\omega}|^{2} \chi_{J} + \sum_{n_{r,\omega}=l} |u_{r,\omega}|^{2} \chi_{J^{c}}\right]\\
& \leq Ce^{-c|s-l|} + Ce^{-\tilde{\eta_{1}}|s-l|}.
\end{split}
\end{align*}

In the second case, again by orthonormality, $\displaystyle \mathbb{E}\left[\sum_{n_{r,\omega}=l} |u_{r,\omega}(s)|^{2}\right] \leq 1$.

By letting $\tilde{\gamma} = \min\{c,\tilde{\eta_{1}}\}$ and choosing a sufficiently large $\tilde{C}> 0$, we obtain: $$ \mathbb{E}\left[\sum_{n_{r,\omega}=l}|u_{r,\omega}(s)|^{2}\right] \leq \tilde{C}e^{-\tilde{\gamma}|s-l|}.$$
\qed

\bibliographystyle{plain}
\bibliography{randomwordbib}

\begin{thebibliography}{10}

\bibitem{Aizenman_1998}
M~Aizenman and G~M Graf.
\newblock Localization bounds for an electron gas.
\newblock {\em Journal of Physics A: Mathematical and General},
  31(32):6783--6806, aug 1998.

\bibitem{FMM}
Michael Aizenman, Jeffrey~H. Schenker, Roland~M. Friedrich, and Dirk
  Hundertmark.
\newblock Finite-volume fractional-moment criteria for anderson localization.
\newblock {\em Communications in Mathematical Physics}, 224:219--253, 2001.

\bibitem{ckm}
Ren\'e Carmona, Abel Klein, and Fabio Martinelli.
\newblock Anderson localization for {B}ernoulli and other singular potentials.
\newblock {\em Comm. Math. Phys.}, 108(1):41--66, 1987.

\bibitem{damanikword}
David Damanik, Robert Sims, and G{\"u}nter Stolz.
\newblock Localization for discrete one-dimensional random word models.
\newblock {\em Journal of Functional Analysis}, 208(2):423 -- 445, 2004.

\bibitem{germinetdimer}
Stephan De~Bi\`evre and Fran\c{c}ois Germinet.
\newblock Dynamical localization for the random dimer {S}chr\"{o}dinger
  operator.
\newblock {\em J. Statist. Phys.}, 98(5-6):1135--1148, 2000.

\bibitem{MR1428099}
R.~del Rio, S.~Jitomirskaya, Y.~Last, and B.~Simon.
\newblock Operators with singular continuous spectrum. {IV}. {H}ausdorff
  dimensions, rank one perturbations, and localization.
\newblock {\em J. Anal. Math.}, 69:153--200, 1996.

\bibitem{Delyon1983OnedimensionalWE}
F.~Delyon, H.~Kunz, and B.~Souillard.
\newblock One-dimensional wave equations in disordered media.
\newblock {\em Journal of Physics A}, 16:25--42, 1983.

\bibitem{dunlapwu}
David~H. Dunlap, H-L. Wu, and Philip~W. Phillips.
\newblock Absence of localization in a random-dimer model.
\newblock {\em Phys. Rev. Lett.}, 65:88--91, Jul 1990.

\bibitem{durrett}
Rick Durrett.
\newblock {\em Probability: theory and examples}, volume~31 of {\em Cambridge
  Series in Statistical and Probabilistic Mathematics}.
\newblock Cambridge University Press, Cambridge, fourth edition, 2010.

\bibitem{elgartdl}
Alexander Elgart, Mira Shamis, and Sasha Sodin.
\newblock Localisation for non-monotone {S}chr\"{o}dinger operators.
\newblock {\em J. Eur. Math. Soc. (JEMS)}, 16(5):909--924, 2014.

\bibitem{GeZhao}
Lingrui Ge and Xin Zhao.
\newblock Exponential dynamical localization in expectation for the one
  dimensional anderson model.
\newblock To appear in JST.

\bibitem{germinetdl}
F.~Germinet and S.~De~Bi\`evre.
\newblock Dynamical localization for discrete and continuous random
  {S}chr\"{o}dinger operators.
\newblock {\em Comm. Math. Phys.}, 194(2):323--341, 1998.

\bibitem{gordon}
A.~Y. Gordon, S.~Jitomirskaya, Y.~Last, and B.~Simon.
\newblock Duality and singular continuous spectrum in the almost mathieu
  equation.
\newblock {\em Acta Math.}, 178(2):169--183, 1997.

\bibitem{Schnols}
Rui Han.
\newblock Sch'nol's theorem and the spectrum of long range operators.
\newblock •.

\bibitem{stolz}
S.~Jitomirskaya, H.~Schulz-Baldes, and G.~Stolz.
\newblock Delocalization in random polymer models.
\newblock {\em Comm. Math. Phys.}, 233(1):27--48, 2003.

\bibitem{xj}
S.~{Jitomirskaya} and X.~{Zhu}.
\newblock {Large deviations of the {L}yapunov exponent and localization for the
  1D {A}nderson model}.
\newblock {\em ArXiv e-prints}, March 2018.

\bibitem{krugerj}
Svetlana Jitomirskaya and Helge Kr\"{u}ger.
\newblock Exponential dynamical localization for the almost {M}athieu operator.
\newblock {\em Comm. Math. Phys.}, 322(3):877--882, 2013.

\bibitem{sharp}
Svetlana Jitomirskaya and Hermann Schulz-Baldes.
\newblock Upper bounds on wavepacket spreading for random jacobi matrices.
\newblock {\em Communications in Mathematical Physics}, 273(3):601--618, Aug
  2007.

\bibitem{10.1214/aop/1176996798}
J.~F.~C. Kingman.
\newblock {Subadditive Ergodic Theory}.
\newblock {\em The Annals of Probability}, 1(6):883 -- 899, 1973.

\bibitem{kunz1980spectre}
Herv{\'e} Kunz and Bernard Souillard.
\newblock Sur le spectre des op{\'e}rateurs aux diff{\'e}rences finies
  al{\'e}atoires.
\newblock {\em Communications in Mathematical Physics}, 78(2):201--246, 1980.

\bibitem{cmp/1103908097}
L.~A. Pastur.
\newblock {Spectral properties of disordered systems in the one-body
  approximation}.
\newblock {\em Communications in Mathematical Physics}, 75(2):179 -- 196, 1980.

\bibitem{tsai}
Jhishen Tsay et~al.
\newblock Some uniform estimates in products of random matrices.
\newblock {\em Taiwanese Journal of Mathematics}, 3(3):291--302, 1999.

\end{thebibliography}

\end{document}